\documentclass[12pt]{article}

\usepackage{fullpage}
\usepackage[T1]{fontenc}
\usepackage{ae,aecompl}
\usepackage[dvips]{graphicx}
\usepackage{amssymb}
\usepackage{amsmath}
\usepackage[round,authoryear]{natbib}
\citestyle{authordate}
\usepackage{color}
\usepackage{array}
%\usepackage{rotating}
%\usepackage{colortbl}
%\usepackage{tikz} 
%\usetikzlibrary{patterns}

\definecolor{webgreen}{rgb}{0,0.4,0}
\definecolor{webbrown}{rgb}{0.6,0,0}
\definecolor{purple}{rgb}{0.5,0,0.25}
\definecolor{darkblue}{rgb}{0,0,0.7}
\definecolor{darkred}{rgb}{0.7,0,0}
\definecolor{darkgreen}{rgb}{0,0.7,0}
\usepackage[pdfborder=false]{hyperref}
\hypersetup{colorlinks,citecolor=darkred,filecolor=black,linkcolor=darkblue,urlcolor=webgreen,pdfpagemode=none,
pdfstartview=FitH}
\newcommand{\ignore}[1]{}
\newtheorem{lemma}{{\sc Lemma}}[section]

\newtheorem{cor}{{\sc Corollary}}[section]
\newtheorem{theorem}{{\sc Theorem}}[section]
\newtheorem{defn}{{\sc Definition}}[section]

\newenvironment{proof}{\noindent {\bf \sl Proof\/}:\enspace}
{\hfill $\blacksquare{}$ \vspace{12pt}}
\usepackage{sectsty}
\sectionfont{\centering\normalfont\scshape}
\subsectionfont{\centering\normalfont}

\begin{document}
	
	\title{\bf Approximate Revenue from Finite Range Mechanisms}
	\author{\bf Mridu Prabal Goswami\thanks{Indian Statistical Institute, Tezpur, India. I am thankful to a research grant from SERB, DST, Government of India, and Magesh Kumar K. K. and Manish Yadav.}}
	\maketitle

\begin{abstract} \noindent We consider an economic environment where a seller wants to sell an indivisible unit of good to a buyer. We show that revenue from any strategy-proof and individually rational mechanism defined on closed intervals of rich single crossing domains considered in \citep{Goswami1}, can be approximated by the revenue from a sequence of 
strategy-proof and individually rational mechanisms with finite range. Thus while studying optimal mechanisms without loss of generality we can study mechanisms with finite range.   	    
\end{abstract}

Key words: optimal mechanism, strategy-proof, finite, approximation
\section{\bf Introduction}

We consider an economic environment where a seller wishes to sell an indivisible object to a buyer. The buyer obtains the object with probability $q$, and in return pays $t$ to the seller. The preferences of the buyer are continuous and monotone and come from a domain of rich single-crossing preferences. Various classes of quasilinear and non-quasilinear preference domains are examples of single-crossing domains. 
We assume that the buyer knows, and only the buyer knows her preference, i.e., preference is private information of the buyer.
The seller does not know the preference of the buyer, and thus wants to elicit this private information.     
Hence, we study mechanisms that are strategy-proof, i.e., mechanisms where the buyer has no incentive to misreport her preference. 
\citep{myer} and \citep{Hag and Rog} for quasilinear preferences find the optimal  mechanisms to be finite. Further, \citep{Mishra1} and \citep{Goswami1} study strategy-proof mechanisms in non-quasilinear preferences when the mechanisms have finite range. However, it remains to be investigated whether it is without loss of generality to concentrate on mechanisms that have finite range.                    
The objective of this paper is to establish that the expected revenue from every strategy-proof and individually rational  mechanism defined on rich single-crossing domains of continuous and monotone preferences can be approximated by the expected revenue from a sequence of strategy-proof and individually rational mechanisms with finite range. 
Thus  we can look for optimal mechanisms without loss of generality within the classes of strategy-proof and individually rational mechanisms with finite range.
For a detailed review of literature for mechanism design on single-crossing domains see \citep{Goswami1}. To the best of our information the fact that the expected revenue from a strategy-proof and individually rational mechanism can be approximated by a sequence of expected revenues entailed by a sequence of strategy-proof and individually rational mechanisms with finite range is not known. Applications of single-crossing conditions in mechanism design goes back to \citep{Spence}, \citep{Mirr}       
and \citep{Roths}. However, the strength of this ordinal property to study strategy-proofness in general had remained to be explored.   
For example, to the best of our information, the fact that the single-crossing property is a unifying factor among various quasilinear, non-quasilinear, multidimensional parametric classes of preferences had not been explored until now. In \citep{Goswami1} we give examples to highlight this fact. The geometric structure of strategy-proof mechanisms  obtained due to the single-crossing property in \citep{Goswami1} is used to provide a simple proof of our approximation result. 
The single-crossing property entails an order on the set of preferences. As a result of strategy-proofness the payment function is monotone, where the order on the preferences is due to the single-crossing property. We consider $t$ to be measurable, where the domain of $t$, i.e., a domain of single-crossing preferences, is endowed with the order topology.               
The proof relies on two simple facts: $(1)$ $t$ can be approximated by an increasing sequence of simple functions where each simple function is monotone increasing, $(2)$ every such simple function, they are in fact step functions since $t$ is monotone, is dominated by a strategy-proof mechanism with finite range. Fact $(2)$ is entailed by the single-crossing property. Section \ref{sec:pre} contains basic notations and definitions. In Section \ref{sec:result} we state and prove the main result.    
Section \ref{sec:con} concludes the paper.   
 
\section{\bf Preliminaries} 
\label{sec:pre}

The economic environment in this paper is same as \citep{Goswami1}. It consists of a seller and a buyer. 
The seller sells an indivisible unit of a good. Let $q\in [0,1]$ denote the probability that the object is sold to the buyer.    
In return, the buyer needs to make a payment to the seller. This  payment is denoted by $t$. The set of allocations is denoted by 
$\mathbb{Z}$, and $\mathbb{Z}=[0,\infty[\times [0,1]$, where $[0,\infty[=\Re_{+}$ denotes the set of non-negative real numbers.
A typical bundle is denoted by 
$(t,q)$, where $t\in \Re_{+}$ and $q\in [0,1]$.  The buyer's preference over $\mathbb{Z}$ is denoted by $R$. 
The strict counterpart of $R$ is denoted by $P$, and indifference is denoted by $I$.
For $z\in \mathbb{Z}$, and $R$, let $UC(R,z)=\{z'\in \mathbb{Z}| z' Rz\}$. In words $UC(R,z)$ is the set of bundles that are weakly preferred to $z$ under $R$.
Likewise $LC(R,z)=\{z'|zRz'\}$, $LC(R,z)$ is the set of bundles that are weakly less preferred to $z$ under $R$. 
We assume the preferences to be continuous and monotone, in short we call the CM preferences.       
The notion of a CM preference is defined formally below.

\begin{defn}[ CM Preference] \rm 
	The complete, transitive preference relation $R$ on $\mathbb{Z}$ is \textbf{CM} if $R$ is  
	{\bf monotone}, i.e., 
	
	\begin{itemize}
		
		\item \textbf{money-monotone:} for all $q\in [0,1]$, if $t''>t'$, then $(t',q)P(t'',q)$.
		\item \textbf{$q$-monotone:} for all $t\in \Re_+$, if $q''>q'$, then $(t,q'')P(t,q')$.

	\end{itemize}
	
	\noindent and  \textbf{continuous} for each $z\in \mathbb{Z}$, the sets $UC(R,z)$ and $LC(R,z)$ are closed sets.{\footnote{These two sets are closed in the product topology on the Euclidean space $\mathbb{Z}$.    }}
	
	\label{defn:CM} 
\end{defn}

\noindent The CM preferences are named classical in \citep{Goswami1}. Let $R$ be a CM preference and $x$ a bundle, define  $IC(R,x)=\{y\in \mathbb{Z}\mid yIx\}$.{\footnote{An $IC(R,x)$ set also represents an equivalence class of the equivalence relation $I$.}} The set $IC(R,x)$ is the set of bundles that are indifferent to $x$ according to the preference $R$. 
It can be seen easily that due to the properties of a CM preference, an $IC$ set can be represented as a curve in $\mathbb{Z}$. Thus, we may also call an $IC$ set an $IC$ curve. We shall represent $\Re_{+}$ on the horizontal axis, and $[0,1]$ on the vertical axis. Next we make a remark about considering CM preferences to be primitives of our approach. 

An $IC$ curve in $\mathbb{Z}$ is an upward slopping curve, i.e., if $(t',q'),(t'',q'')\in IC(R,x)$ and $t'<t''$, then $q'<q''$. Let $x'=(t',q'), x''=(t'',q'')$. By $x'\leq x''$ we mean either $x'=x''$ or $t'<t'',q'<q''$. Further, by $x'<x''$ we mean $t'<t'', q'<q''$.     
We call two bundles $x'=(t',q'), x''=(t'',q'')$ {\bf diagonal} if $x'< x''$. 
The single-crossing property is defined next.

\begin{defn}[Single-Crossing of two Preferences]\rm
	We say that two distinct CM preferences $R', R''$ exhibit the {\bf single-crossing property} if and only if  for all $x,y,z \in \mathbb{Z}$, 
	$$\text{if}~ z\in IC(R',x)\cap IC(R'',y),~\text{then}~ IC(R',x)\cap IC(R'',y)=\{z\}.$$
	\label{defn:single_crossing}
\end{defn}
\noindent The single-crossing property implies that two $IC$ curves of two distinct preferences can meet ( or cut) at most one bundle. The single-crossing property is an ordinal property of preferences, i.e., this property does not depend on utility representations of preferences.
Next we define the notion of a single-crossing domain.

\begin{defn}[Rich Single-crossing domain]\rm We call a subset of the set of CM preferences {\bf single-crossing domain} if any $R',R''$ that belongs to the subset 
	satisfy the single-crossing property. We call a single crossing domain {\bf rich} if for any two bundles $x'=(t',q'), x''=(t'',q'')$ such that $t'<t'',q'<q''$ there is $R$ in the single crossing domain such that 
	$x'Ix''$. We denote a rich single-crossing domain by $\mathcal{R}^{cms}$.        	
\end{defn}

\noindent The single-crossing property provides a natural way to define an order on $\mathcal{R}^{cms}$. We state the order next. Let $\square(z)=\{x\mid x\leq z\}$.

\begin{defn}\rm 
	Let $\mathcal{R}^{cms}$ be a rich single crossing domain. Consider $z\in \mathbb{Z}$ and $R', R''\in \mathcal{R}^{cms}$. We say, $R''$ \textbf{cuts} $R'$ \textbf{from above} at $z\in \mathbb{Z}$, if and only if  
	\[ \square(z)\cap UC(R'',z) \subseteq \square(z)\cap UC(R',z). \]
	\noindent We say that $R''$ cuts $R'$ from above if $R''$ cuts $R'$ from above at every bundle. In this case we say $R'\prec R''$. 
	\label{defn:cut} 
\end{defn}

\noindent In words, the indifference curve of $R''$ through $z$ lies above the indifference curve for $R'$ through $z$ in $\square(z)$ if both indifference curves are viewed from the horizontal axis $\Re_+$. It is argued in \citep{Goswami1}  that if $R''$ cuts $R'$ from above at some $z\in \mathbb{Z}$, then $R''$ cuts $R'$ from above at every $z\in \mathbb{Z}$ which makes the order well defined. 
Due to the order $\prec$, the intervals $[R^1, R^2]=\{R, R^1, R^2\in \mathcal{R}^{cms}\mid R^1\precsim R\precsim R^2\}, [R^1, R^2[=\{R, R^1,R^2\in \mathcal{R}^{cms}\mid R^1\precsim R\prec R^2\}$,  $]R^1, R^2[=\{R, R^1, R^2\in \mathcal{R}^{cms}\mid R^1\prec R\prec R^2\}$, $]R^1, R^2]=\{R, R^1, R^2\in \mathcal{R}^{cms}\mid R^1\prec R\precsim R^2\}$ are well defined.
In this order $\mathcal{R}^{cms}$ is a linear continuum and thus 
supremuma and infimum are well defined. 
The order $\prec$ defines an order topology on $\mathcal{R}^{cms}$.  
This order topology is metrizable, and intervals are connected and closed intervals are compact in $\mathcal{R}^{cms}$, see \citep{Goswami1} for details.
Thus the sigma algebra on any topological subspace of $\mathcal{R}^{cms}$ is the corresponding Borel sigma algebra of the subspace.   
We define  strategy-proof and individually rational mechanisms next.

\begin{defn} \rm A function $F:[\underline{R}, \overline{R}]\rightarrow \mathbb{Z}$ is called a {\bf mechanism}. A mechanism is {\bf strategy-proof} if for every $R', R''\in [\underline{R}, \overline{R}]$,
	$F(R')R'F(R'')$.  A mechanism is {\bf individually rational} if for all $R\in [\underline{R}, \overline{R}]$,  $F(R)R(0,0)$.    
	\label{defn:sp}
\end{defn}

\noindent In terms of component functions we denote a mechanism $F$ by $F=(t_F,q_F)$.

\begin{defn}\rm 
	A SCF $F:\mathcal{R}^{cms}\rightarrow \mathbb{Z}$ is \textbf{monotone with respect to the order relation} $\prec$ on $\mathcal{R}^{cms}$ or simply \textbf{monotone} if for every $R', R''\in \mathcal{R}^{cms}$,
	[$R'\prec R''\iff F(R') \leq F(R'')$].
	\label{defn:mon}
\end{defn}

\noindent The following lemma shows that if $F$ is strategy-proof, then $F$ is monotone. 
\begin{lemma}\rm 
	Let  $F:\mathcal{R}^{cms}\rightarrow \mathbb{Z}$ be a strategy proof SCF. Then $F$ is monotone.
	\label{lemma:mon}
\end{lemma}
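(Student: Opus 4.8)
The plan is to prove the substantive implication ``$R'\prec R''\Rightarrow F(R')\le F(R'')$''; the converse will then follow quickly from the total order on $\mathcal{R}^{cms}$ together with a comparability fact established along the way, so essentially all of the work goes into the forward direction. Write $F=(t_F,q_F)$, fix $R',R''\in\mathcal{R}^{cms}$, and set $z'=F(R')=(t',q')$ and $z''=F(R'')=(t'',q'')$. Strategy-proofness (Definition~\ref{defn:sp}) supplies the two relations $z'\,R'\,z''$ and $z''\,R''\,z'$, and these, together with monotonicity of the preferences, are all that will enter.

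First I would show, using strategy-proofness and monotonicity only, that exactly one of $z'=z''$, $z'<z''$, $z''<z'$ holds, i.e.\ that the image of $F$ is a chain for the order $\le$ on $\mathbb{Z}$. One argues by trichotomy on $t',t''$. If $t'=t''$, then $q$-monotonicity applied to $z'\,R'\,z''$ forces $q'\ge q''$ while applied to $z''\,R''\,z'$ it forces $q'\le q''$, so $z'=z''$. If $t'<t''$, then money-monotonicity (at $t'<t''$), $q$-monotonicity, and transitivity show that $q'\ge q''$ would imply $z'\,P''\,z''$, contradicting $z''\,R''\,z'$; hence $q'<q''$ and $z'<z''$. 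The case $t'>t''$ is symmetric — now using $z'\,R'\,z''$ — and yields $z''<z'$. In particular $F(R')$ and $F(R'')$ are always $\le$-comparable.

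Next, assume $R'\prec R''$ and suppose, for contradiction, that $z''<z'$, which is the only bad alternative left by the first step. Then $z''\in\square(z')$. From Definition~\ref{defn:cut} applied at $z'$ we have $\square(z')\cap UC(R'',z')\subseteq UC(R',z')$, and since $z''\,R''\,z'$ (strategy-proofness) puts $z''$ in the left-hand set, we get $z''\,R'\,z'$; with $z'\,R'\,z''$ this yields $z'\,I'\,z''$. I would then invoke the geometric reading of ``$R''$ cuts $R'$ from above at $z'$'' used in \citep{Goswami1}, namely that it is the \emph{pointwise} statement that the indifference curve of $R''$ through $z'$ lies weakly above that of $R'$ throughout $\square(z')$; equivalently, the complementary inclusion $\square(z')\cap LC(R',z')\subseteq LC(R'',z')$ also holds. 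Since $z'\,R'\,z''$ puts $z''$ in $LC(R',z')\cap\square(z')$, this gives $z'\,R''\,z''$, and with $z''\,R''\,z'$ we obtain $z'\,I''\,z''$. Hence $z'$ and $z''$ are two distinct bundles (distinct because $z''<z'$) lying in $IC(R',z')\cap IC(R'',z')$, and $R'\ne R''$ because $F$ is a function with $F(R')\ne F(R'')$; this contradicts the single-crossing property (Definition~\ref{defn:single_crossing}). Therefore $z''<z'$ is impossible and $F(R')\le F(R'')$. For the converse, if $F(R')\le F(R'')$ with $R'\ne R''$, then by totality of $\prec$ either $R'\prec R''$ or $R''\prec R'$, and the latter would give $F(R'')\le F(R')$ by the forward implication, which together with the first step is incompatible with $F(R')<F(R'')$; so $R'\prec R''$.

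I expect the main obstacle to be precisely the geometric input used in the third paragraph. Definition~\ref{defn:cut} literally asserts only a single-bundle set inclusion for $UC$, and used in isolation it merely returns non-contradictory membership facts (every point one feeds in lands back in $IC(R',z')$); what is really needed is that ``cuts from above at $z$'' is the pointwise ordering of the two indifference curves on all of $\square(z)$, equivalently the $LC$-inclusion $\square(z)\cap LC(R',z)\subseteq LC(R'',z)$. Deriving this from the $UC$-inclusion is an intermediate-value argument on the two curves that leans on continuity and on single-crossing (the curves meet only at $z$); this is the step I would either prove carefully or quote from \citep{Goswami1}.
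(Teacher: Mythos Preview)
The paper does not actually prove this lemma in-text; its entire proof reads ``See \citep{Goswami1}.'' So there is no in-paper argument to compare against, and your proposal is a reconstruction of what that cited proof presumably contains. Your forward direction is sound: the ``range is a chain'' step from strategy-proofness plus CM monotonicity is correct, and the contradiction via two distinct points on $IC(R',z')\cap IC(R'',z')$ is the right endgame. You have also correctly isolated the one nontrivial ingredient, namely upgrading the $UC$-inclusion of Definition~\ref{defn:cut} to the companion $LC$-inclusion on $\square(z')$. The bare contrapositive of the $UC$-inclusion gives only the strict part ($z'P'x\Rightarrow z'P''x$); handling the boundary case $x\in IC(R',z')\setminus\{z'\}$ genuinely needs continuity of the indifference curves together with single-crossing, exactly as you note, and this is the geometric content one quotes from \citep{Goswami1}.

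One small gap: your converse does not go through when $F(R')=F(R'')$ with $R'\neq R''$. Totality plus the forward implication then yield both $F(R')\le F(R'')$ and $F(R'')\le F(R')$, which is consistent, so you cannot conclude $R'\prec R''$. Read literally, the biconditional in Definition~\ref{defn:mon} would force $F$ to be injective, which strategy-proofness alone does not give; this is a wrinkle in the paper's formulation (the intended reading is almost certainly $\precsim$, or the one-sided implication) rather than a defect in your method, but you should flag it instead of asserting the full biconditional.
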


\begin{proof}\rm See \citep{Goswami1}

\end{proof}

\section{\bf The Result}
\label{sec:result}
Denote $([\underline{R}, \overline{R}],B([\underline{R}, \overline{R}]), \mu)$ be a probability space, where $B([\underline{R}, \overline{R}])$ is the Borel sigma algebra due to the order topology, and $\mu$ is a probability measure. We assume $\mu $ to be $(i)$ ({\bf non-trivial}) for any non-empty and non-singleton interval $A\subseteq [\underline{R},\overline{R}]$ that is  $\mu(A)>0$ ; and $(ii)$ ({\bf continuous}) for any $A\in {\cal B}$ with $\mu(A)>0$ and any $c\in \Re$ with $0<c<\mu(A)$, there exists $B\in {\cal B}$ with $B\subseteq A$ and $\mu(B)=c$. Continuity of $\mu$ implies $\mu$ is non-atomic, in particular $\mu(\{R\})=0$ for all $R\in [\underline{R},\overline{R}]$. Let $F=(t_F,q_F)$, then $E[F]=\int_{[\underline{R}, \overline{R}]}t_F(R)d\mu(R)$.  
Since $\mu(\underline{R})=0$ and we are interested in mechanisms that maximize expected revenue, without loss of generality let $F(\underline{R})=(0,0)$. If for a strategy-proof and individually rational mechanism $F$, there is a mechanism $G$ with 
$Rn(G)$, where $Rn(G)$ finite denotes the range of $G$, and $E[F]\leq E[G]$, the we have nothing to prove. Thus, in 
Theorem \ref{thm:result} we assume that there is no strategy-proof mechanism $G$ with $Rn(G)$ finite that can generate at least as large expected revenue generated by $F$.

\begin{theorem}\rm Let $F=(t_F,q_F)$ be strategy-proof and individually rational.  
Suppose for all mechanisms $G$   
with $Rn(G)$ finite, $E[G]<E[F]$. Then, for every $\epsilon>0$ there is a mechanism $G$ with $Rn(G)$ finite such that $E[F]-E[G]<\epsilon$. 
\label{thm:result}
\end{theorem}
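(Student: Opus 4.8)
The plan is to realize $t_F$ as a monotone nonnegative measurable function on the linearly ordered space $[\underline{R},\overline{R}]$ and approximate it from below by monotone simple (hence step) functions, then dominate each step function by a genuine finite-range strategy-proof mechanism. Concretely, I would proceed as follows.

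\textbf{Step 1: Approximate $t_F$ by an increasing sequence of monotone step functions.} By Lemma~\ref{lemma:mon}, $t_F$ is monotone increasing with respect to $\prec$, and by construction $t_F(\underline{R})=0$, so $t_F\geq 0$. Apply the standard ``dyadic'' construction of simple functions approximating a nonnegative measurable function from below: $s_n(R)=\sum_{k=0}^{n2^n-1}\frac{k}{2^n}\mathbf{1}_{\{k/2^n\le t_F(R)<(k+1)/2^n\}}+n\,\mathbf{1}_{\{t_F(R)\ge n\}}$. Each $s_n$ is measurable, $0\le s_n\le s_{n+1}\le t_F$, and $s_n\uparrow t_F$ pointwise. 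The key observation is that because $t_F$ is monotone in the \emph{linear} order $\prec$, each level set $\{t_F\ge c\}$ is an up-set, i.e. an interval of the form $[R_c,\overline{R}]$ or $]R_c,\overline{R}]$; hence $s_n$ is itself a monotone step function taking finitely many values $0=a_0<a_1<\dots<a_{m_n}$ on finitely many consecutive intervals $I^n_0\prec I^n_1\prec\dots\prec I^n_{m_n}$ that partition $[\underline{R},\overline{R}]$.

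\textbf{Step 2: Dominate each step function by a finite-range strategy-proof mechanism.} This is where the single-crossing geometry from \citep{Goswami1} enters, and it is the heart of the argument. Given the step function $s_n$ with its partition into intervals $I^n_0\prec\cdots\prec I^n_{m_n}$ and values $a_0<\cdots<a_{m_n}$, I want a strategy-proof, individually rational mechanism $G_n$ with finite range such that $t_{G_n}(R)\ge s_n(R)$ for all $R$ (or at least $E[G_n]\ge \int s_n\,d\mu$). The idea: for each boundary preference $R^n_j=\inf I^n_j$, the bundle $F(R^n_j)=(t_F(R^n_j),q_F(R^n_j))$ lies on $IC(R^n_j,\cdot)$; using strategy-proofness of $F$ and the single-crossing order, the bundles $\{F(R^n_j)\}_{j}$ are diagonally ordered in $\mathbb{Z}$, and a menu consisting of these finitely many bundles (together with $(0,0)$) defines a strategy-proof IR mechanism whose payment on $I^n_j$ is at least $t_F(R^n_j)\ge$ the value $s_n$ takes just below the top of $I^n_j$. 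One must be slightly careful about \emph{which} endpoint's value one captures — because $s_n$ on $I^n_j$ equals roughly the infimum of $t_F$ over $I^n_j$, choosing the menu bundles at the \emph{left} endpoints and invoking monotonicity gives $t_{G_n}\ge s_n$ up to the discrepancy on the individual intervals, which is controlled in Step~3. I expect the precise statement to be a lemma (presumably already isolated in the paper or provable directly from the order structure): \emph{any finite diagonally-ordered antichain of bundles that is rationalizable at its defining preferences extends to a finite-range strategy-proof IR mechanism.}

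\textbf{Step 3: Pass to the limit via monotone convergence.} Since $0\le s_n\uparrow t_F$ pointwise and $t_F$ is integrable (its expectation $E[F]$ is assumed finite — indeed we are comparing it to finite-range mechanisms), the Monotone Convergence Theorem gives $\int s_n\,d\mu\to E[F]$. Combined with $E[G_n]\ge \int s_n\,d\mu - \delta_n$ where $\delta_n\to 0$ accounts for the within-interval discrepancy (which vanishes because the mesh $1/2^n\to 0$ and $\mu$ is a probability measure, so $\delta_n\le 1/2^n$), we get $E[G_n]\to E[F]$, hence $E[F]-E[G_n]<\epsilon$ for $n$ large. Note the hypothesis ``$E[G]<E[F]$ for all finite-range $G$'' is only used to make the statement non-vacuous; the construction produces $G$ with $E[G]$ arbitrarily close to $E[F]$ from below.

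\textbf{Main obstacle.} The routine parts are Steps~1 and~3 (standard measure theory, using only that $\prec$ is a linear order so monotone functions have interval level sets, and that $\mu$ is a probability measure). The real work is Step~2: verifying that the finite set of bundles $\{F(R^n_j)\}$ assembled from a strategy-proof $F$ can be turned into the range of a bona fide finite-range strategy-proof IR mechanism, with payments dominating $s_n$. This requires the single-crossing/geometric lemmas of \citep{Goswami1} — in particular that (i) these bundles are pairwise diagonal, (ii) assigning each $R$ its most-preferred bundle in this finite menu yields a strategy-proof mechanism, and (iii) the resulting payment on interval $I^n_j$ is at least $a_{j}$ (or $a_{j-1}$, with the off-by-one absorbed into $\delta_n$). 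I would therefore devote most of the proof to stating and invoking this domination lemma cleanly, after which the approximation follows immediately.
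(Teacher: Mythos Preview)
Your proposal is correct and follows essentially the same route as the paper: dyadic step-function approximation of the monotone $t_F$, domination of each step function by a finite-menu mechanism built from the bundles $\{F(R^n_j)\}$, and monotone convergence. The only difference worth noting is that the paper makes the Step~2 domination \emph{exact}: it defines the menu cutoffs $R^{j*}$ via the indifference conditions $(0,0)\,I^{1*}\,F(R^1)\,I^{2*}\,F(R^2)\,\cdots$, then uses strategy-proofness of $F$ together with single-crossing to show $R^{j*}\precsim R^j$ for every $j$, which yields $s_n(R)\le t_{G_n}(R)$ for all $R$ pointwise---so your $\delta_n$ term is unnecessary, and the sandwich $\int s_n\,d\mu\le E[G_n]\le E[F]$ follows directly.
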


\begin{proof} If $Rn(F)$ is finite, then we have nothing to prove.    
	
\noindent{\bf Step $1$:} We note that $t_F$ can be approximated by a sequence of simple functions $\{s_{n}\}_{n=1}^{\infty}, s_{n}:[\underline{R},\overline{R}]\rightarrow [0,\infty[$ such that if $R'\prec  R''$, then $s_n(R')\leq s_n(R'')$. Also for all $n$, for all $R$, $s_n(R)\leq t_F(R)$.
This follows from the construction of the simple functions. Let $k=1,\ldots, n2^n$

$$s_n(R)=\begin{cases}
	\frac{k-1}{2^n}, & \text{if $\frac{k-1}{2^n}\leq t_F(R)<\frac{k}{2^n}$;}\\
	n, & \text{if $t_F(R)\geq n$.}
\end{cases}$$

\noindent The construction of these simple functions follows from Theorem $1.5.5$ in \citep{Ash}.   
      
\medskip      
      
\noindent{\bf Step $2$:} Consider the case when $F$ is a continuous, so that $t_F$ is continuous. We investigate the nature of $s_n$s in more details.
For this let without loss of generality $2\leq t_F(\overline{R})$, the number $2$ is chosen arbitrarily to see how the simple functions behave.   
Suppose $n=1$. Then $k=1, 2$. 
Since $t_F$ is monotone, we can consider a measurable partition of $[\underline{R},\overline{R}]$ such that $0\leq t_F(R)<\frac{1}{2}$ if $R\in [\underline{R}, R^1[$; $\frac{1}{2}\leq t_F(R)<1$ if $R\in [R^1, R^2[$, 
and $t_{F}(F)\geq 1$ if $R\in [R^2, \overline{R}]$.       
Then $s_1(R)=0$ for all $R\in [\underline{R}, R^1[$, and $s_1(R)=\frac{1}{2}$ for all $R\in [R^1,R^2[$ and $s_{1}(R)=1$  $R\in [R^2, \overline{R}]$. Since $[\underline{R}, \overline{R}]$ is endowed with the Borel sigma algebra, $[\underline{R}, R^1[$, $[R^1, R^2[$, $[R^2, \overline{R}]$ are Borel measurable. 
Now let $n=2$. Then $k=1, 2, \ldots, 8$. 
Then, the partition of the range of $t_F$ is 
$[0, \frac{1}{4}[, [   \frac{1}{4},  \frac{2}{4}=\frac{1}{2}[, [   \frac{2}{4}, \frac{3}{4}[, [\frac{3}{4}, 1[, [1, \frac{5}{4}[, [\frac{5}{4}, \frac{6}{4}[,  
[\frac{6}{4}, \frac{7}{4}[,  [\frac{7}{4}, 2[, [2,\infty[$. 
Since $t_F$ is monotone we can set 

$$\begin{cases}
	0\leq t_F(R)<\frac {1}{4}, & \text{if $R\in [\underline{R}, R^1[$;}\\
	\frac {1}{4}\leq  t_F(R)<\frac {1}{2}, & \text{if $[R^1, R^2[$;}\\
\frac {1}{2}\leq  t_F(R)<\frac {3}{4}, & \text{if $R\in[R^2, R^3[$}\\
\frac {3}{4}\leq  t_F(R)< 1 & \text{if $R\in[R^3, R^4[  $}\\ 
1\leq  t_F(R)< \frac{5}{4}, & \text{if $R\in[R^4, R^5[  $}\\ 
\frac {5}{4}\leq  t_F(R)< \frac{6}{4}, & \text{if$R\in[R^5, R^6[ $}\\ 
\frac {6}{4}\leq  t_F(R)< \frac{7}{4}, & \text{if $R\in[R^6, R^7[ $}\\ 
\frac {7}{4}\leq  t_F(R)< 2, & \text{if $R\in[R^7, R^8[ $}\\ 
2\leq  t_F(R), & \text{if $R\in [R^8,\overline{R}] $.}

\end{cases}$$

\noindent The above partition is well defined. 
To see this consider without loss of generality $R_{*}=\inf \{R\mid t(R)=\frac{1}{4}\}$. 
Then, since $[\underline{R},\overline{R}]$ is metrizable, consider a sequence $R^n\rightarrow R_*$ and $t(R^n)=\frac{1}{4}$. Then by continuity of $t$, $t(R_*)=\frac{1}{4}$. That is, $R_*=R^1$.    
Thus, 

$$s_{2}(R)=\begin{cases}
	0, & \text{if $R\in [\underline{R}, R^1[$;}\\
	\frac {1}{4}, & \text{if $[R^1, R^2[$;}\\
	\frac {1}{2}, & \text{if $R\in[R^2, R^3[$;}\\
	\frac {3}{4}, & \text{if $R\in[R^3, R^4[ $;}\\ 
	1, & \text{if $R\in[R^4, R^5[  $;}\\ 
	\frac {5}{4}, & \text{if$R\in[R^5, R^6[ $;}\\ 
	\frac {6}{4}, & \text{if $R\in[R^6, R^7[ $;}\\ 
	\frac {7}{4}, & \text{if $R\in[R^7, R^8[ $;}\\ 
	2 & \text{if $R\in [R^8,\overline{R}] $.}
	
\end{cases}$$

\noindent We observe that each $s_n$ is increasing and that in each measurable interval $s_n$ takes the smallest value of $t_F$ in that interval. By Theorem $1.5.5$
in \citep{Ash}, for all $R$, $\lim_{n\rightarrow \infty}s_n(R)=t_{F}(R)$.  
By the monotone convergence theorem in \citep{Ash} $\lim_{n\rightarrow \infty}\int_{[\underline{R},\overline{R}]}s_{n}(R)d\mu(R)=
\int_{[\underline{R},\overline{R}]}t_F(R)d\mu(R)$. Without loss of generality consider $s_2$ and the measurable partition that defines $s_2$, and consider the following mechanism with finite range. 

$$G^{s_2}(R)=\begin{cases}
	(0,0), & \text{if $R\in [\underline{R}, R^{1*}[$;}\\
	F(R^1), & \text{if $[R^{1*}, R^{2*}[$;}\\
	F(R^2), & \text{if $R\in[R^{2*}, R^{3*}[$;}\\
	F(R^3), & \text{if $R\in[R^{3*}, R^{4*}[ $;}\\ 
	F(R^4), & \text{if $R\in[R^{4*}, R^{5*}[  $;}\\ 
	F(R^5), & \text{if$R\in[R^{5*}, R^{6*}[ $;}\\ 
	F(R^6), & \text{if $R\in[R^{6*}, R^{7*}[ $;}\\ 
	F(R^7), & \text{if $R\in[R^{7*}, R^{8*}[ $;}\\ 
	F(R^8), & \text{if $R\in [R^{8*}, R^{9*}]; $}\\
	F(\overline{R}), & \text{if $R\in [R^{9*}, \overline{R}]$}. 
\end{cases}$$

\noindent Here $G^{s_2}=(t_G^{s_2}, q_G^{s_2})$, and $(0,0)I^{1*}F(R^1)I^{2*}F(R^2),\ldots, F(R^8)I^{9*}F(\overline{R})$.
Further, $\underline{R}\precsim R^{1*}\precsim,\ldots,\precsim R^{9*}\precsim \overline{R}$. 
Figure $1$ depicts this construction. Since $F$ is strategy-proof and individually rational due to the single-crossing property it follows in a straightforward manner that $G^{s_2}$ is strategy-proof and individually rational.  Since, $\underline{R}\precsim R^{1*}\precsim R^1\precsim R^{2*}\precsim R^2\precsim R^{3*}\precsim R^4 \precsim R^{4*}\precsim R^5\precsim R^{5*}\precsim R^6 \precsim R^{6*}\precsim R^{7}\precsim R^{7*}\precsim R^{8}\precsim R^{8*}\precsim R^{9}\precsim R^{9*}\precsim \overline{R}$; for all $R$, $s_{2}(R)\leq t_{G}^{s_2}(R)$. Consider $R\in [\underline{R}, R^1]$. For all $R\in [\underline{R}, R^{1*}[$, $s_{2}(R)=0=t_{G}^{s_2}(R)$; 
and for all $R\in [R^{1*}, R^{1}]$, $s_{2}(R) \in \{0,\frac{1}{4}\}\leq t_{G}^{s_2}(R)=t_{F}(R^1)$. 
For all $R \in [R^1,R^{2*}[$, $s_{2}(R)=\frac{1}{4}\leq t_{G}^{s_2}(R)=t_{F}(R^1)$; and  
for all $R \in [R^{2*}, R^{2}]$, $s_{2}(R)\in \{\frac{1}{4}, \frac{1}{2}\}\leq t_{G}^{s_2}(R)=t_{F}(R^2)$ and so on. Thus, 
$\int_{[\underline{R}, \overline{R}]}s_{2}(R)d\mu(R)\leq \int_{[\underline{R}, \overline{R}]}t_{G}^{s_2}(R)d\mu(R)$, and in general for all $n$,  
$\int_{[\underline{R}, \overline{R}]}s_{n}(R)d\mu(R)\leq \int_{[\underline{R}, \overline{R}]}t_{G}^{s_n}(R)d\mu(R)\leq \int_{[\underline{R}, \overline{R}]}t_F(R)d\mu(R)$.  
Since $\lim_{n\rightarrow \infty}\int_{[\underline{R}, \overline{R}]}s_{n}(R)d\mu(R)= \int_{[\underline{R}, \overline{R}]}t_F(R)d\mu(R)$, there exists $G^{s_n}$ such that
$E(F)-E(G^{s_n})<\epsilon$.

\begin{center}
\includegraphics[height=12cm, width=16cm]{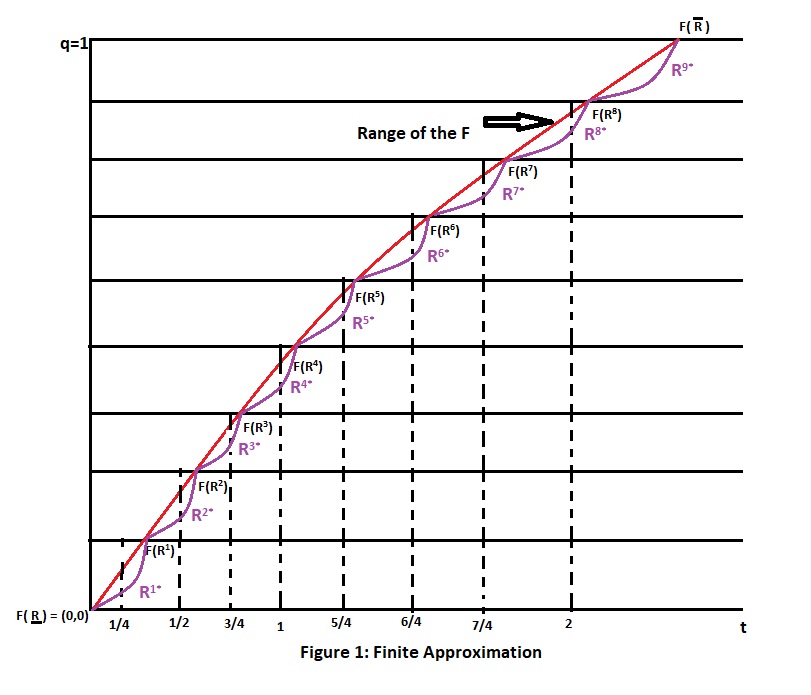}
\end{center}

\noindent{\bf Step $3$:} Now let $t$ be not continuous. Consider $s_2$ again and consider $0\leq t_{F}(R)<\frac{1}{4}$. Let $R^1=\sup\{R\mid t_F(R)<\frac{1}{4}\}$. Suppose, $t_F(R^1)\geq \frac{1}{4}$, then we are not missing $R^1$ while describing the measurable partition of the range of $t_F$. Now, suppose  $t_F(R^1)<\frac{1}{4}$. Then, we have $R^1$ missing while describing the measurable partition of the range of $t_F$ as above. But this does not create any problem with our argument. 
While defining $s_2$ we set $s_2(R)=0$ if $R\in [\underline{R}, R^1]$ and for $\frac{1}{4}$ we have keep the interval open at $R^1$. 
That is, $s_2(R)=\frac{1}{4}$ if $R\in ]R^1,R^2[$. Now, what if there is no $R$ such that $\frac{1}{4}\leq t_{F}(R)<\frac{1}{2}$. This means that the number of elements in the partition of $[\underline{R}, \overline{R} ]$ is less relative to the situation when $t$ is continuous. Thus, let us assume that there is an $R$ such that $\frac{1}{4}\leq t_{F}(R)<\frac{1}{2}$.     
Again depending on what happens at $\frac{1}{2}$ we may also have
$s_2(R)=\frac{1}{4}$ if $R\in ]R^1,R^2]$. That is, if $t$ is not continuous, then intervals could be any of the four kinds described in Section \ref{sec:pre}, otherwise the simple functions that approximate $t_F$ remain step functions as in Step $2$ and the number of elements in the partition maybe less relative to the situation when $t$ is continuous. 

 Now $(0,0)I^{1*}F(R^1)I^{2*}F(R^2)$, and $\underline{R}\precsim R^{1*}\precsim R^1\precsim R^{2*}\precsim R^2$ and so on are well defined by the single-crossing property and monotonicity of $F$. That is, the specifications in $G^{s_2}$ does not depend on the nature of the intervals that define the simple functions. What we need is that for all $R$, $s_2(R)\leq t_{G}^{s_2}(R)$, and this holds. 
 For example consider $[\underline{R} ,R^2]$.  
 For all $R\in [\underline{R}, R^{1*}[$, $s_{2}(R)=0=t_{G}^{s_2}(R)$. Even if $R^{1*}=R^1$ the equality between the two functions holds. 
 Now consider $[R^{1*},R^1]$. 
For all $R\in [R^{1*},R^1], s_{2}(R)\in \{0,\frac{1}{4}\}\leq t_{G}^{s_2}(R)=t_{F}(R^1)$. Let the $F(R^1)<\frac{1}{4}$. 
Then, for all $R\in [\underline{R},R^1]$, $s_{2}(R)=0\leq t_{G}^{s_2}(R)$. Thus, consider $]R^1, R^{2*}[$. For $R\in ]R^1, R^{2*}[$, $s_{2}(R)=\frac{1}{4}\leq t_{G}^{s_2}(R)=t_{F}(R^1)$. For all $R\in [R^{2*}, R^{2}], s_{2}(R)\in \{\frac{1}{4},\frac{1}{2}\}\leq t_{G}^{s_2}(R)=t_{F}(R^2)$ and so on.

\end{proof}
  
 \noindent The following corollary is immediate.
 
\begin{cor}\rm Let $[\underline{R}, \overline{R}]$ be given. Let $F$ be a strategy-proof and individually rational mechanism such that $Rn(F)$ is finite, and for all $G$ with $Rn(G)$ finite $E[F]\geq E[G]$. Then, $F$ is an optimal mechanism among all strategy-proof and individually rational mechanism among all mechanisms.       
 	
\label{cor:1} 	
\end{cor}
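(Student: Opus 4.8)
The plan is to obtain the corollary as a short deduction from Theorem \ref{thm:result} by contradiction, so that the whole approximation machinery of the theorem is reused rather than rebuilt. First I would suppose, toward a contradiction, that $F$ is \emph{not} optimal among all strategy-proof and individually rational mechanisms. Then there exists a strategy-proof and individually rational mechanism $H$, possibly with infinite range, such that $E[H] > E[F]$. The goal is to show that such an $H$ cannot exist.

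The key step is to check that this $H$ satisfies the hypothesis of Theorem \ref{thm:result}. By the standing assumption on $F$ we have $E[F] \geq E[G]$ for every mechanism $G$ with $Rn(G)$ finite. Chaining this with $E[H] > E[F]$ yields $E[G] \leq E[F] < E[H]$, and hence $E[G] < E[H]$ for all finite-range $G$. Since $H$ is strategy-proof and individually rational and no finite-range mechanism matches or exceeds its expected revenue, the premise of Theorem \ref{thm:result} holds with $H$ playing the role of the mechanism to be approximated.

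I would then invoke Theorem \ref{thm:result} for $H$ with the specific choice $\epsilon := E[H] - E[F] > 0$. The theorem delivers a mechanism $G$ with $Rn(G)$ finite satisfying $E[H] - E[G] < \epsilon = E[H] - E[F]$, which rearranges to $E[G] > E[F]$. This directly contradicts the hypothesis $E[F] \geq E[G]$ for all finite-range $G$. Therefore no such $H$ exists, and $F$ maximizes expected revenue over all strategy-proof and individually rational mechanisms, as claimed.

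The only point that needs care, and what I regard as the crux rather than a genuine obstacle, is the strict inequality $E[G] < E[H]$ for \emph{every} finite-range $G$: this is precisely the condition that unlocks Theorem \ref{thm:result}, and it follows by transitivity once $E[H] > E[F]$ is assumed. The finite range of $F$ itself enters only to place $F$ within the class over which it is assumed to dominate; the substantive work is done entirely by the approximation guarantee already established in Theorem \ref{thm:result}.
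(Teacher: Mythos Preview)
Your proof is correct and follows essentially the same contradiction strategy as the paper: assume a strategy-proof, individually rational mechanism beats $F$, invoke Theorem \ref{thm:result} with a suitably small $\epsilon$ to produce a finite-range mechanism whose revenue exceeds $E[F]$, and contradict the optimality of $F$ among finite-range mechanisms. Your version is in fact more careful than the paper's, since you explicitly verify that the hypothesis of Theorem \ref{thm:result} (namely $E[G] < E[H]$ for every finite-range $G$) holds for $H$ and you specify the choice $\epsilon = E[H] - E[F]$, whereas the paper leaves both points implicit.
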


\begin{proof} Let by way of contradiction there is a strategy-proof and individually rational mechanism $K$ with $Rn(K)$ not finite and $E[K]>E[F]$. Then, by Theorem \ref{thm:result} it there is $G$, with $Rn(G)$ finite such that $E[F]<E[G]\in ]E[K]-\epsilon, E[K]+\epsilon[$. This is a contradiction.

\end{proof}

 \noindent Corollary \ref{cor:1} implies that if there is an optimal mechanism among the strategy-proof and individually rational mechanisms with finite range, then it must be the optimal mechanism among all mechanisms.       
  
\section{Concluding Remarks}\label{sec:con} In this paper we explore strategy-proof mechanisms defined on closed intervals of rich single-crossing domains. We find that the revenue from any strategy-proof and individually rational mechanism can be approximated by revenues from a sequence of strategy-proof and individually rational mechanisms with finite range. Our proof of the result pertaining to this observation follows from the simple geometry of strategy-proof mechanisms, and the latter follows due to the single-crossing property. Our result applies to both quasilinear and non-quasilinear domains.

\end{document}